\documentclass[letterpaper,10pt,conference]{ieeeconf}
\IEEEoverridecommandlockouts
\usepackage{cite}
\usepackage{stmaryrd}
\usepackage{amsmath,amssymb,amsfonts}
\usepackage{algorithm}
\usepackage{algpseudocode}
\usepackage{makecell}
\usepackage{tabularx}

\floatname{algorithm}{Protocol}

\algrenewcommand{\algorithmicrequire}{\textbf{Initialization:}}
\algrenewcommand{\algorithmicensure}{\textbf{Output:}}

\algrenewcommand{\algorithmicindent}{1.2em}

\newtheorem{theorem}{Theorem}
\newtheorem{assumption}{Assumption}

\newcommand{\Enc}{\mathrm{Enc}}
\newcommand{\Dec}{\mathrm{Dec}}

\usepackage{braket}
\usepackage{graphicx}
\usepackage{textcomp}
\usepackage{xcolor}
\def\BibTeX{{\rm B\kern-.05em{\sc i\kern-.025em b}\kern-.08em
    T\kern-.1667em\lower.7ex\hbox{E}\kern-.125emX}}
\begin{document}

\title{Explicit Model Predictive Control with Quantum Encryption
}

\author{Yingjie Mi, Zihao Ren, Lei Wang, Daniel E. Quevedo,  and Guodong Shi
\thanks{Y. Mi, Z. Ren and G. Shi are with the Australian Centre for Robotics, School of Aerospace, Mechanical and Mechatronic Engineering, The University of Sydney, Sydney, NSW, Australia. Emails: yingjie.mi@sydney.edu.au, zihao.ren@sydney.edu.au guodong.shi@sydney.edu.au}
\thanks{L. Wang is with the State Key Laboratory of Industrial Control Technology, Institute of Cyber-Systems and Control, Zhejiang University, Zhejiang, China. E-mail: lei.wangzju@zju.edu.cn.}
\thanks{D. Quevedo is with the School of Electrical and Computer Engineering, The University of Sydney, Sydney, NSW, Australia. Email: daniel.quevedo@sydney.edu.au}
}

\maketitle

\begin{abstract}
This paper studies quantum-encrypted explicit MPC for constrained discrete-time linear systems in a cloud-based architecture. A finite-horizon quadratic MPC problem is solved offline to obtain a piecewise-affine controller. Shared quantum keys generated from Bell pairs and protected by quantum key distribution are used to encrypt the online control evaluation between the sensor and actuator. Based on this architecture, we develop a lightweight encrypted explicit MPC protocol, prove exact recovery of the plaintext control action, and characterize its computational efficiency. Numerical results demonstrate lower online complexity than classical encrypted MPC, while security is discussed in terms of confidentiality of plant data and control inputs.
\end{abstract}

\textbf{Index Terms—} Model predictive control, Quantum key distribution, Encrypted control.

\section{Introduction}
Networked and cloud-based control systems coordinate sensing, computation, communication, and actuation over shared communication networks. They arise in applications such as smart grids, robotics, building automation, and intelligent transportation, where they improve scalability and flexibility. However, communication also exposes plant states, control inputs, and controller-side computations to eavesdropping, inference, and malicious manipulation, raising security and privacy concerns \cite{6257525,6545301}.

Encrypted control addresses these risks by enabling control laws to be evaluated on protected data, thereby preserving confidentiality during both communication and computation \cite{9438025,SCHLUTER2023100913,7403296}. Existing approaches based on homomorphic encryption, secret sharing, and secure multi-party computation have been developed for linear feedback, dynamic controllers, and model predictive control (MPC) \cite{FAROKHI201713,9029342,alexandru2019secure,7403296,9086151,8126799,8619835,schluter2020encrypted}. Among these, MPC is especially attractive because it explicitly handles constraints while optimizing closed-loop performance.

Despite this progress, encrypted MPC still faces two main challenges. First, many implementations rely on costly ciphertext arithmetic, leading to significant online overhead \cite{8126799}. Second, many representative schemes are built on classical public-key cryptography and may therefore be vulnerable to future quantum attacks \cite{chen2016report}.

Motivated by these issues, this paper studies encrypted MPC for constrained discrete-time linear systems using quantum-generated keys. We consider a finite-horizon quadratic MPC problem in explicit form, so that the online controller is a piecewise affine law computed offline. To preserve confidentiality in a cloud-based architecture, a quantum channel is established between the sensor and actuator to generate shared random keys from entangled Bell pairs, which can be further protected by quantum key distribution \cite{ekert1991quantum,nielsen2010quantum}. These keys are available only to the sensor and actuator, while the cloud has access neither to the keys nor to plaintext state and input signals.

As illustrated in Fig.~\ref{fig:system_diagram}, we develop a lightweight quantum-encrypted explicit MPC protocol that extends the quantum-encrypted state-feedback control framework in \cite{ren2025quantum} to constrained predictive control with a region-dependent piecewise affine structure \cite{bemporad2002explicit}. The sensor performs local region identification and encryption, the cloud evaluates the encrypted affine law, and the actuator decrypts the control input. We establish exact recovery of the plaintext explicit MPC law and lower online computational complexity than representative classical encrypted MPC methods under matched control accuracy.

\begin{figure}
    \centering
    \includegraphics[width=0.8\linewidth]{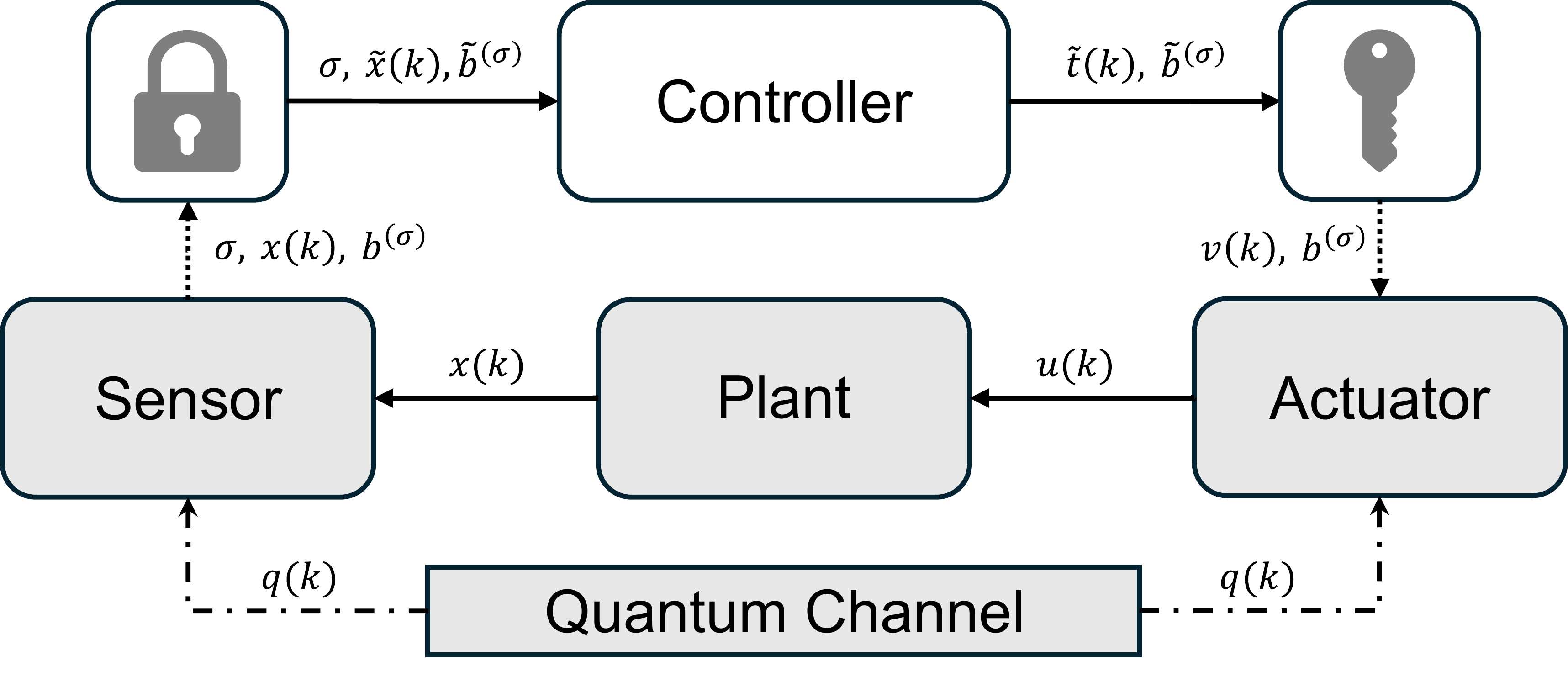}
    \caption{Architecture of the proposed quantum-encrypted explicit MPC. The sensor identifies the active region and encrypts the state and affine offset, the cloud evaluates the encrypted affine law, and the actuator decrypts the control input using shared keys generated via a quantum channel.}
    \label{fig:system_diagram}
\end{figure}

The remainder of this paper is organized as follows. Section~II formulates the explicit MPC problem and the quantum-key-based encrypted control architecture. Section~III presents the proposed quantum-encrypted explicit MPC protocol and its main theoretical properties. Section~IV reports numerical results and comparisons with classical encrypted MPC baselines. Section~V concludes the paper.

\section{Background}
\subsection{Linear-Quadratic MPC}
We consider a discrete-time linear time-invariant (LTI) system
\begin{equation}
x(k+1)=Ax(k)+Bu(k), \qquad y(k)=Cx(k),
\label{eq:lti}
\end{equation}
where $x(k)\in\mathbb{R}^{n}$, $u(k)\in\mathbb{R}^{m}$, and $y(k)\in\mathbb{R}^{p}$ denote the state, input, and measured output, respectively, and $k\in\mathbb{N}$ is the discrete-time index.
The matrices $A\in\mathbb{R}^{n\times n}$, $B\in\mathbb{R}^{n\times m}$, and $C\in\mathbb{R}^{p\times n}$ are assumed known.
The state and input satisfy $x(k)\in\mathcal{X}$ and $u(k)\in\mathcal{U}$,
where $\mathcal{X}\subset\mathbb{R}^{n}$ and $\mathcal{U}\subset\mathbb{R}^{m}$
are compact convex polyhedra.
 
\label{subsec:ocp}
At each time $k$, given the current state $x(k)$, we solve the following constrained finite-horizon linear-quadratic optimal control problem
\begin{equation}
\begin{aligned}
\min_{\substack{\bar x(0),\ldots,\bar x(N)\\ \bar u(0),\ldots,\bar u(N-1)}} \quad &
\|\bar x(N)\|_{P}^{2} +
\sum_{\kappa=0}^{N-1}\bigl(\|\bar x(\kappa)\|_{Q}^{2}+\|\bar u(\kappa)\|_{R}^{2}\bigr)\\
\text{s.t.}\quad 
\bar x(0) & = x(k),\\
\bar x(\kappa+1) & = A \bar x(\kappa)+B \bar u(\kappa), \qquad \kappa=0,\dots,N-1,\\
\bar x(\kappa) & \in\mathcal{X},\; \bar u(\kappa)\in\mathcal{U}, \qquad \kappa=0,\dots,N-1,\\
\bar x(N) & \in\mathcal{T}.
\end{aligned}
\label{eq:ocp}
\end{equation}
where $N\in\mathbb{N}$ is the prediction horizon, $Q\succeq 0$, $R\succ 0$, and $P\succeq 0$ are weighting matrices, and $\mathcal{T}\subseteq\mathcal{X}$ is a terminal set.
The receding-horizon control law applies only the first input,
\begin{equation}
u(k)=\bar u^\star(0),
\label{eq:receding_horizon}
\end{equation}
where $\bar u^\star(0)$ denotes the first element of the optimal input sequence.

\subsection{Explicit MPC}
Under the standing assumptions that $\mathcal{X}$, $\mathcal{U}$, and $\mathcal{T}$ are convex polyhedra and $R\succ 0$, problem \eqref{eq:ocp} is a strictly convex QP with a unique optimizer, which can be equivalently expressed in condensed form (see, e.g., \cite{bemporad2002explicit})
\begin{equation}
\begin{aligned}
z^\star(x)=\arg\min_{z\in\mathbb{R}^{mN}} \quad &
\frac{1}{2}z^\top H z + x^\top F^\top z \\
\text{s.t.}\quad & G z \le E x + h .
\end{aligned}
\label{eq:qp_condensed}
\end{equation}
Here, the decision variable stacks the predicted inputs,
\begin{equation}
z := \begin{bmatrix}\tilde u(0)^\top & \cdots & \tilde u(N-1)^\top\end{bmatrix}^\top \in \mathbb{R}^{mN},
\label{eq:z_def}
\end{equation}
and $H\in\mathbb{R}^{mN\times mN}$ is symmetric positive definite.
All matrices $(H,F,G,E,h)$ are constant and can be precomputed offline, while the parameter is the current state $x=x(k)$.
The MPC input is obtained from $z^\star(x)$ via
\begin{equation}
u(k)=C z^\star(x(k)),\; C:=\begin{bmatrix} I_m & 0 & \cdots & 0 \end{bmatrix}\in\mathbb{R}^{m\times mN}.
\label{eq:u_from_z}
\end{equation}

In \eqref{eq:qp_condensed}, the current state $x=x(k)$ is considered as a parameter through affine terms in both the objective and the constraints, hence \eqref{eq:qp_condensed} is a strictly convex multi-parametric QP (mpQP) in $x$. The following is a classical result on the solutions to the MPC as a piecewise affine feedback control law established in \cite{bemporad2002explicit}. 

\medskip

\begin{theorem}
Under standard nondegeneracy conditions, there exists a finite polyhedral partition $\{\mathcal{P}^{(\sigma)}\}_{\sigma=1}^{s}$ of the feasible state set such that the explicit MPC feedback is piecewise affine
\begin{equation}
u(k)=K^{(\sigma)}x(k)+b^{(\sigma)} \quad \text{if } x(k)\in\mathcal{P}^{(\sigma)},
\label{eq:pwa}
\end{equation}
where $\sigma\in\{1,\ldots,s\}$ denotes the region index, $s\in\mathbb{N}$ is the
number of regions, and $K^{(\sigma)}\in\mathbb{R}^{m\times n}$,
$b^{(\sigma)}\in\mathbb{R}^{m}$.
\label{thm:explicit_mpc_pwa}
\end{theorem}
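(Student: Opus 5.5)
The plan is the classical mpQP argument of \cite{bemporad2002explicit}, applied to the condensed problem \eqref{eq:qp_condensed} and then composed with the selection map $C$ in \eqref{eq:u_from_z}.

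\textbf{Optimality conditions for fixed $x$.} Fix a feasible $x$. Since $H\succ 0$ and the constraints are affine, Slater-type qualification is not needed: the KKT conditions are necessary and sufficient, and $z^\star(x)$ is unique. The KKT conditions are
\begin{equation*}
Hz+Fx+G^\top\lambda=0,\quad \lambda\ge 0,\quad Gz\le Ex+h,\quad \lambda_i\,(G_iz-E_ix-h_i)=0 .
\end{equation*}
Let $\mathcal{A}(x)$ denote the set of active constraints at $z^\star(x)$, and write $G_{\mathcal{A}},E_{\mathcal{A}},h_{\mathcal{A}}$ for the corresponding rows.

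\textbf{Affine formulas on one critical region.} Fix an active set $\mathcal{A}$ and assume nondegeneracy, namely that $G_{\mathcal{A}}$ has full row rank (LICQ). Then the equality part of the KKT system can be solved explicitly:
\begin{equation*}
z=-H^{-1}(Fx+G_{\mathcal{A}}^\top\lambda_{\mathcal{A}}),
\end{equation*}
\begin{equation*}
\lambda_{\mathcal{A}}=-(G_{\mathcal{A}}H^{-1}G_{\mathcal{A}}^\top)^{-1}\bigl(E_{\mathcal{A}}x+h_{\mathcal{A}}+G_{\mathcal{A}}H^{-1}Fx\bigr).
\end{equation*}
The matrix $G_{\mathcal{A}}H^{-1}G_{\mathcal{A}}^\top$ is invertible by LICQ together with $H\succ 0$. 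Both $z$ and $\lambda_{\mathcal{A}}$ are therefore affine in $x$. The critical region $\mathrm{CR}_{\mathcal{A}}$ is the set of $x$ at which these affine expressions satisfy dual feasibility $\lambda_{\mathcal{A}}(x)\ge 0$ and primal feasibility of the inactive constraints, $G_{\mathcal{N}}z(x)\le E_{\mathcal{N}}x+h_{\mathcal{N}}$. Both conditions are finitely many affine inequalities in $x$, so $\mathrm{CR}_{\mathcal{A}}$ is a polyhedron. By sufficiency of KKT, on $\mathrm{CR}_{\mathcal{A}}$ the affine $z(x)$ equals $z^\star(x)$.

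\textbf{Covering and partition.} Every feasible $x$ has some active set, so the critical regions of the finitely many subsets $\mathcal{A}\subseteq\{1,\dots,q\}$ cover the feasible set. Their number is at most $2^q$, where $q$ is the number of rows of $G$. I would discard the empty regions and the lower-dimensional ones, and index the remaining regions by $\sigma=1,\dots,s$. Uniqueness of $z^\star$ guarantees that on any overlap the affine formulas agree, so the resulting map is well defined and in fact continuous. To obtain a genuine partition rather than a cover, I would assign each boundary point to a single region, for instance by lexicographic order of the index. Finally, setting $u=Cz^\star(x)$ and letting $(K^{(\sigma)},b^{(\sigma)})$ be $C$ times the linear and constant parts of $z(x)$ on region $\sigma$ gives \eqref{eq:pwa}.

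\textbf{Main obstacle.} The delicate step is turning the cover into a partition of the whole feasible set. This requires two things: the full-dimensional regions must cover the feasible set up to their boundaries, and degenerate active sets, where LICQ fails, must be handled. I would take the ``standard nondegeneracy conditions'' to include LICQ and strict complementarity on region interiors. I would then argue by closedness and continuity of $z^\star$: every point of the feasible set is a limit of points in the interiors of full-dimensional critical regions. It follows that the closures of the full-dimensional regions cover the feasible set, and the affine laws extend to the shared boundaries.
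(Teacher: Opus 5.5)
Your proposal is correct and follows the standard KKT/active-set mpQP argument of \cite{bemporad2002explicit}, which the paper invokes by citation without giving a proof of its own. Enumerating all LICQ active sets, instead of following Bemporad et al.'s region-exploration procedure, is a harmless and, for pure existence, cleaner route. One small refinement: each critical region is closed and its affine formula is already valid on its boundary, so no extension step is needed. The fact that the full-dimensional regions cover the feasible set then follows from finiteness of the regions and convexity and (relative) full-dimensionality of the feasible set, not from continuity of $z^\star$.
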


\subsection{Explicit MPC with Encryption for Networked Systems}
Privacy concerns in networked control motivate encrypted schemes that hide measurements and actuation signals from untrusted servers and eavesdroppers. Here we briefly review a homomorphic-encryption realization of explicit MPC in a cloud-assisted architecture, which serves as the classical baseline for the proposed QE-MPC. Throughout, $\tilde{\cdot}$ denotes ciphertexts, and $\Enc_{\mathrm{HE}}(\cdot)$ and $\Dec_{\mathrm{HE}}(\cdot)$ denote the corresponding encryption and decryption mappings.
Additively homomorphic schemes such as Paillier~\cite{10.1007/3-540-48910-X_16} support ciphertext-domain addition and multiplication by a known plaintext constant~\cite{9438025}. For plaintext integers $z_1,z_2$ and scalar $a$, it hold that
\begin{equation}
\tilde{z}_1 \oplus \tilde{z}_2 = \Enc_{\mathrm{HE}}(z_1+z_2), 
\qquad
a \odot \tilde{z} = \Enc_{\mathrm{HE}}(a z).
\label{eq:he_ops}
\end{equation}
Hence affine maps can be evaluated directly in the encrypted domain. Since the plaintext space is typically integer-valued, real-valued control laws are implemented through fixed-point encoding, which introduces quantization effects~\cite{9029342}.

Because constrained MPC involves operations such as comparison and projection that are not natively supported by additive HE~\cite{9438025}, a common approach is to solve the constrained optimization offline and encrypt only the online evaluation of the resulting explicit MPC law~\cite{9438025,schluter2020encrypted}. At time $k$, the sensor measures $x(k)$ and identifies the active region
\begin{equation}
\sigma \gets \pi\big(x(k)\big)\quad \text{s.t.}\quad x(k)\in \mathcal P^{(\sigma)}.
\label{eq:he_region_id}
\end{equation}
It then sends $\textsf{Msg}^{\mathrm{HE}}_{S\to C}(k):=(\sigma,\tilde x(k))$, where $\tilde x(k)=\Enc_{\mathrm{HE}}(x(k))$. Using $\sigma$, the cloud selects $(K^{(\sigma)},b^{(\sigma)})$ and computes
\begin{equation}
\tilde u(k)=\Big(K^{(\sigma)} \odot \tilde x(k)\Big)\oplus \tilde b^{(\sigma)},
\;
\tilde b^{(\sigma)}=\Enc_{\mathrm{HE}}\!\big(b^{(\sigma)}\big),
\label{eq:he_affine_eval}
\end{equation}
after which the actuator decrypts and applies
\begin{equation}
u(k)=\Dec_{\mathrm{HE}}\!\big(\tilde u(k)\big).
\label{eq:he_dec_apply}
\end{equation}

This architecture prevents direct recovery of plaintext states and inputs by passive eavesdroppers and by an honest-but-curious cloud without decryption keys. However, it still has three main limitations:
\begin{enumerate}
    \item the cloud observes the active region index $\sigma$;
    \item fixed-point realization introduces quantization overhead; and
    \item classical HE relies on costly public-key arithmetic and computational hardness assumptions that are not generally regarded as quantum resilient.
\end{enumerate}
These limitations motivate the quantum encrypted explicit MPC architecture developed next.

\section{Quantum Encrypted MPC}
We now describe the proposed quantum-encrypted explicit MPC architecture, see Fig.~\ref{fig:system_diagram}. The framework consists of a sensor, an untrusted cloud controller, and an actuator. The sensor performs local region identification and encryption, the cloud evaluates the encrypted affine law using the active gain in plaintext, and the actuator reconstructs the control input using shared keys established with the sensor. Thus, the cloud has access neither to the shared keys nor to plaintext state and input signals.

\subsection{Quantum Encrypted Explicit MPC}
The proposed architecture consists of a sensor (S), an untrusted cloud controller (C), and an actuator (A). The sensor and actuator are connected by a quantum channel for key establishment, while the sensor--cloud and cloud--actuator links are classical channels for control-related communication.

At each discrete-time instant $k\in\mathbb{N}$, the quantum channel distributes $w_q$ Bell pairs between S and A. For $\ell=1,\dots,w_q$, the $\ell$th pair is prepared as
\begin{equation}
\ket{\psi^\ast}^{(\ell)}=\frac{1}{\sqrt{2}}\big(\ket{0}_S\ket{0}_A+\ket{1}_S\ket{1}_A\big),
\label{eq:bell_state}
\end{equation}
with S holding the first qubit and A the second. By measuring their qubits in the computational basis, S and A obtain an identical random key string $q(k)\in\{0,1\}_{w_q}$, while the cloud has no access to $q(k)$, see \cite{nielsen2010quantum}.

We instantiate the quantum-encrypted primitives through an exponential--logarithmic realization introduced in \cite{ren2025quantum}. Throughout, $\tilde{\cdot}$ denotes ciphertext-domain quantities and $[\cdot]_i$ the $i$th component of a vector. At each time $k\in\mathbb{N}$, the sensor and actuator share a fresh key stream $q(k)\in\{0,1\}_{w_q}$ generated by the quantum channel. Following \cite{ren2025quantum}, let $d=n+m$ and partition $q(k)$ into $d$ groups of $w_b$ bits, with $dw_b=w_q$, i.e.,
\begin{equation}
q(k)=\overline{b_{1,w_b-1}\cdots b_{1,0}\; b_{2,w_b-1}\cdots b_{2,0}\;\cdots\;
b_{d,w_b-1}\cdots b_{d,0}}(k),
\label{eq:qk_bits}
\end{equation}
where $b_{i,j}(k)\in\{0,1\}$ is the $j$th bit in the $i$th group. The first $n$ groups are assigned to the state components and the remaining $m$ groups to the affine offset components. For each $i=1,\dots,d$, define
\begin{equation}
\beta_i(k):=
-\big(2^{w_b-1}+1\big)b_{i,w_b-1}(k)
+\sum_{j=0}^{w_b-2}2^j b_{i,j}(k)+1,
\label{eq:beta_unified}
\end{equation}
so that $\beta_i(k)\neq 0$ for all $i$ and $k$.

For a scalar plaintext $z$ associated with coefficient $\beta(k)$, encryption and decryption are defined as
\begin{align}
\tilde z = f_{\mathrm{Enc}}(z,q(k))
&:= \exp\!\Big(\frac{z}{\beta(k)}\Big)
\label{eq:fEnc_explog}\\
f_{\mathrm{Dec}}(\tilde z,q(k))
&:= \beta(k)\ln(\tilde z)
\label{eq:fDec_explog}
\end{align}
Accordingly, the sensor encrypts the measured state and region-dependent offset componentwise as
\begin{equation}
[\tilde x(k)]_i=\exp\!\Big(\frac{[x(k)]_i}{\beta_i(k)}\Big),\quad i=1,\dots,n,
\label{eq:enc_x}
\end{equation}
\begin{equation}
[\tilde b^{(\sigma)}]_j=\exp\!\Big(\frac{[b^{(\sigma)}]_j}{\beta_{n+j}(k)}\Big),\quad j=1,\dots,m.
\label{eq:enc_b}
\end{equation}

Given $K^{(\sigma)}$ and $\tilde x(k)$, the cloud evaluates the encrypted linear term through
\begin{equation}
\tilde t(k):=f_{\mathrm{Con}}\!\big(K^{(\sigma)},\tilde x(k)\big),
\label{eq:fCon_packed}
\end{equation}
where the $(j,i)$th entry is
\begin{equation}
[\tilde t(k)]_{j,i}=\big([\tilde x(k)]_i\big)^{[K^{(\sigma)}]_{j,i}}.
\label{eq:fCon_element}
\end{equation}
The actuator then reconstructs the linear part by
\begin{equation}
[v(k)]_j=\sum_{i=1}^{n}\beta_i(k)\ln\!\big([\tilde t(k)]_{j,i}\big),\qquad j=1,\dots,m.
\label{eq:dec_aggregate}
\end{equation}

In Protocol~\ref{prot:qe_empc_protocol}, the cloud holds the explicit MPC gain library and accesses $K^{(\sigma)}$ in plaintext, while the sensor holds the region-dependent offset $b^{(\sigma)}$ in plaintext for local encryption before transmission.

\begin{algorithm}[t]
\caption{Quantum Encrypted Explicit MPC}
\label{prot:qe_empc_protocol}
\begin{algorithmic}[1]
\Require $\{\mathcal P^{(\sigma)},K^{(\sigma)},b^{(\sigma)}\}_{\sigma=1}^{s}$, \;
$f_{\mathrm{Enc}},f_{\mathrm{Con}},f_{\mathrm{Dec}}$
\For{$k\in\mathbb{N}$}
  \State S and A share a fresh key stream $q(k)\in\{0,1\}_{w_q}$.
  \State S measures $x(k)$, sets $\sigma\gets\pi(x(k))$ with $x(k)\in\mathcal P^{(\sigma)}$, and computes
  $\tilde x(k)\gets f_{\mathrm{Enc}}(x(k),q(k))$,
  $\tilde b^{(\sigma)}\gets f_{\mathrm{Enc}}(b^{(\sigma)},q(k))$.
  \State S $\to$ C: $\big(\sigma,\tilde x(k),\tilde b^{(\sigma)}\big)$.
  \State C selects $K^{(\sigma)}$ and computes
  $\tilde t(k)\gets f_{\mathrm{Con}}\!\big(K^{(\sigma)},\tilde x(k)\big)$.
  \State C $\to$ A: $\big(\tilde t(k),\tilde b^{(\sigma)}\big)$.
  \State A computes
  $v(k)\gets f_{\mathrm{Dec}}(\tilde t(k),q(k))$,
  $b^{(\sigma)}\gets f_{\mathrm{Dec}}(\tilde b^{(\sigma)},q(k))$,
  and applies $u(k)\gets v(k)+b^{(\sigma)}$.
\EndFor
\end{algorithmic}
\end{algorithm}

Compared with the classical encrypted explicit MPC scheme in \cite{8126799}, the proposed method adopts the same cloud-assisted architecture for constrained linear systems, with offline explicit MPC, local region identification, cloud-side evaluation of the active affine segment, and actuator-side input reconstruction. Thus, both approaches preserve the same piecewise affine control structure. The main difference lies in the encrypted realization: \cite{8126799} uses Paillier homomorphic encryption with robust MPC compensation for fixed-point quantization, whereas QE-MPC uses a quantum-key-enabled exponential--logarithmic realization with shared keys generated from Bell pairs and also admits a separate quantized implementation. Consequently, QE-MPC preserves the original explicit MPC law while reducing online complexity.

\subsection{Effectiveness}
Here, \emph{effectiveness} means that the encrypted protocol reproduces the original explicit MPC law exactly: after encryption, cloud-side evaluation, and decryption, the recovered input is identical to the corresponding plaintext control input.
\begin{theorem}
\label{thm:effectiveness}
Let $x(k)$ be the measured state and let $\sigma=\pi(x(k))$ be the region index identified at the sensor such that
$x(k)\in\mathcal P^{(\sigma)}$.
Then the control input $u(k)$ generated by ~\eqref{eq:fEnc_explog}--\eqref{eq:fDec_explog} satisfies
\begin{equation}
u(k)=K^{(\sigma)}x(k)+b^{(\sigma)},\qquad \forall k\in\mathbb N,
\label{eq:correctness}
\end{equation}
where $(K^{(\sigma)},b^{(\sigma)})$ are the PWA gains defined in \eqref{eq:pwa}.
\end{theorem}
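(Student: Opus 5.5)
The plan is to trace one round of Protocol~\ref{prot:qe_empc_protocol} at an arbitrary fixed $k$. I will treat the linear part $v(k)$ and the offset part separately, and then add the two.

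The first thing to settle is that every map in the protocol is well defined. From \eqref{eq:beta_unified}, $\beta_i(k)$ is a nonzero integer for every $i$. Hence the exponents $[x(k)]_i/\beta_i(k)$ and $[b^{(\sigma)}]_j/\beta_{n+j}(k)$ in \eqref{eq:enc_x}--\eqref{eq:enc_b} are finite reals. Their exponentials are therefore strictly positive, so the real power $([\tilde x(k)]_i)^{[K^{(\sigma)}]_{j,i}}$ in \eqref{eq:fCon_element} is defined for an arbitrary real exponent and is again positive. This guarantees that every logarithm taken at the actuator acts on a positive real number. This is the one point needing care: a possibly negative, non-integer gain entry is harmless only because the base is positive. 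The identity $\ln(a^c)=c\ln a$ for $a>0$ then justifies the next step.

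For the linear part, I substitute \eqref{eq:enc_x} into \eqref{eq:fCon_element}. This gives $[\tilde t(k)]_{j,i}=\exp\big([K^{(\sigma)}]_{j,i}[x(k)]_i/\beta_i(k)\big)$, so that $\ln([\tilde t(k)]_{j,i})=[K^{(\sigma)}]_{j,i}[x(k)]_i/\beta_i(k)$. Plugging this into \eqref{eq:dec_aggregate}, the key factors $\beta_i(k)$ cancel term by term. The result is $[v(k)]_j=\sum_{i=1}^n [K^{(\sigma)}]_{j,i}[x(k)]_i=[K^{(\sigma)}x(k)]_j$.

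For the offset, I apply \eqref{eq:fDec_explog} with the coefficient $\beta_{n+j}(k)$ to \eqref{eq:enc_b}. Since $\ln\circ\exp$ is the identity on $\mathbb R$, this returns $[b^{(\sigma)}]_j$ exactly.

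Two protocol facts make the decryption use the correct keys. First, the sensor and actuator hold the identical key string $q(k)$ from the Bell-pair measurement, so they compute the same $\beta_i(k)$. Second, the cloud forwards $\tilde b^{(\sigma)}$ unchanged.

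Summing the two parts in the final step of the protocol yields $u(k)=K^{(\sigma)}x(k)+b^{(\sigma)}$. By Theorem~\ref{thm:explicit_mpc_pwa}, since $x(k)\in\mathcal P^{(\sigma)}$, this is exactly the explicit MPC input \eqref{eq:pwa}. Because $k$ was arbitrary and the argument uses only that round's fresh key, \eqref{eq:correctness} holds for all $k\in\mathbb N$.
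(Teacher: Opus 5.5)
Your proposal is correct and follows essentially the same route as the paper: substitute the encryption into the cloud's power map, use $\ln(a^c)=c\ln a$ (equivalently $\ln\circ\exp=\mathrm{id}$) so that each key factor $\beta_i(k)$ cancels term by term, recover the offset in the same way, and stack over $j$. Your extra remarks, namely that $\beta_i(k)\neq 0$, that the bases are positive so real powers are well defined, and that both parties use the same key $q(k)$, are valid details the paper leaves implicit.
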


\begin{proof}
For all $k\in\mathbb N$ and let $\sigma=\pi(x(k))$ such that $x(k)\in\mathcal P^{(\sigma)}$.
By \eqref{eq:fDec_explog} and \eqref{eq:fEnc_explog}, for each $j=1,\dots,m$ we have
\[
\begin{aligned}
u_j(k)
&= \beta_{b,j}(k)\ln\!\big([\tilde b^{(\sigma)}]_j\big)
  +\sum_{i=1}^n \beta_i(k)\ln\!\big([\tilde t(k)]_{j,i}\big)\\
&= \beta_{b,j}(k)\ln\!\big([\tilde b^{(\sigma)}]_j\big)
  +\sum_{i=1}^n \beta_i(k)\ln\!\Big(\big([\tilde x(k)]_i\big)^{[K^{(\sigma)}]_{j,i}}\Big)\\
&= \beta_{b,j}(k)\ln\!\big([\tilde b^{(\sigma)}]_j\big)
  +\sum_{i=1}^n \beta_i(k)\,[K^{(\sigma)}]_{j,i}\,\ln\!\big([\tilde x(k)]_i\big),
\end{aligned}
\]
Then from \ref{eq:fEnc_explog} we obtain the cancellation identities
\[
\begin{aligned}
\ln\!\big([\tilde x(k)]_i\big)=\frac{[x(k)]_i}{\beta_i(k)},\quad i=1,\dots,n,\\
\ln\!\big([\tilde b^{(\sigma)}]_j\big)=\frac{[b^{(\sigma)}]_j}{\beta_{b,j}(k)},\quad j=1,\dots,m.
\end{aligned}
\]
Substituting these identities yields
\[
\begin{aligned}
u_j(k)
&= \beta_{b,j}(k)\frac{[b^{(\sigma)}]_j}{\beta_{b,j}(k)}
  +\sum_{i=1}^n \beta_i(k)\,[K^{(\sigma)}]_{j,i}\,\frac{[x(k)]_i}{\beta_i(k)}\\
&= [b^{(\sigma)}]_j+\sum_{i=1}^n [K^{(\sigma)}]_{j,i}\,[x(k)]_i,\qquad j=1,\dots,m.
\end{aligned}
\]
Stacking $j=1,\dots,m$ gives \eqref{eq:correctness}.
\end{proof}

\subsection{Quantization and Computational Efficiency}
We first introduce quantization models for Paillier-based MPC and QE-MPC, since finite-precision implementation is necessary in practice and directly affects computational cost. In Paillier-based MPC, quantization is required to encode real-valued signals into an integer plaintext space, whereas in QE-MPC it is used to characterize finite-precision implementation and enable a matched-accuracy complexity comparison.

For Paillier, we use the standard uniform fixed-point set parameterized by $\rho,\gamma,\delta\in\mathbb{N}$ with $\rho\ge 1$:
\begin{equation}
\mathbb{Q}_{\rho,\gamma,\delta}
:=
\left\{
-\rho^{\gamma}+k\rho^{-\delta}\ \big|\ k=0,1,\ldots,2\rho^{\gamma+\delta}-1
\right\},
\end{equation}
which has spacing $\rho^{-\delta}$. Let $g:\mathbb{R}\to\mathbb{Q}_{\rho,\gamma,\delta}$ satisfy
\begin{equation}
|g(x)-x|\le \rho^{-\delta},
\qquad \forall x\in[-\rho^{\gamma},\rho^{\gamma}],
\label{eq:q_err_darup}
\end{equation}
and define
\begin{equation}
q(x):=\rho^{\delta}g(x)\in\mathbb{Z},
\qquad
\mu(x):=q(x)\bmod n \in \{0,\ldots,n-1\},
\label{eq:pt_encoding}
\end{equation}
where $n$ is the Paillier modulus.

For QE-MPC, given $w\in\mathbb{N}$ and $v\in\mathbb{R}$, define
\begin{equation}
y:=g(v):=
\begin{cases}
v, & v>1,\\[0.3ex]
2-\frac{1}{v}, & v\le 1,
\end{cases}
\label{eq:qec_g}
\end{equation}
and let
\[
\xi_w:=\sum_{j=0}^{w-1}2^{-j}a_j,\qquad
\eta:=2^{w-1}y-\lfloor 2^{w-1}y\rfloor .
\]
Then $Q_w(v)=\overline{a_{w-1}a_{w-2}\cdots a_0}\in\{0,1\}^w$ is generated by the stochastic map $h_w$ such that
\begin{align}
\mathbb{P}\!\left(\xi_w=\frac{\lfloor 2^{w-1}y\rfloor}{2^{w-1}}\right)&=1-\eta, \nonumber\\
\mathbb{P}\!\left(\xi_w=\frac{\lfloor 2^{w-1}y\rfloor+1}{2^{w-1}}\right)&=\eta.
\label{eq:qempc_hw}
\end{align}
For $v\in[1/2,2]$, this quantizer is unbiased and satisfies
\begin{equation}
\mathbb{E}\!\left[\xi_w-y\right]=0,
\qquad
\mathbb{E}\!\left[(\xi_w-y)^2\right]\le 2^{-2w}.
\label{eq:qec_moments}
\end{equation}

\paragraph{Bit-cost model}
Let $L:=\mathrm{bitlen}(N)$, where Paillier ciphertext arithmetic is modulo $N^2$. Under the classical schoolbook bit model \cite{menezes2018handbook}, modular multiplication costs $O(L^2)$, and square-and-multiply exponentiation with a $b$-bit exponent costs $O(bL^2)$. With fixed-point encoding and $\mathrm{bitlen}(z)\le L$, Paillier $\Enc_{\mathrm{HE}}$ and $\Dec_{\mathrm{HE}}$ each cost $O(L^3)$ in the worst case, while $\oplus$ and $\odot$ cost $O(L^2)$ and $O(b_KL^2)$, respectively, where $b_K:=\max_{j,i}\mathrm{bitlen}(\hat K_{j,i})$.

\begin{assumption}
\label{ass:eq_align}
Fix an accuracy target $\varepsilon_q>0$.
Choose the Paillier fixed-point parameter $\delta$ and the QE-MPC quantization bit budget $w$ such that
\begin{equation}
\rho^{-\delta}\le \varepsilon_q,
\qquad
2^{-w}\le \varepsilon_q,
\qquad
p\ge w,
\label{eq:eq_align}
\end{equation}
where $p$ denotes the internal numerical precision (in bits) for ciphertext-domain arithmetic in Protocol~1.
\end{assumption}

\begin{assumption}
\label{ass:no_wrap}
Paillier plaintext parameters are selected such that no modular wrap-around occurs during one-step evaluation of
$u=K^{(\sigma)}x+b^{(\sigma)}$, so ciphertext-domain operations correspond to integer-domain operations after decoding.
\end{assumption}

\begin{theorem}
\label{thm:bit_ops_equal_accuracy}
Consider the evaluation of the explicit MPC law \eqref{eq:pwa} in a fixed region $\sigma$ implemented using the Paillier plaintext encoding \eqref{eq:pt_encoding} and QE-MPC Protocol~1.
Fix an accuracy target $\varepsilon_q>0$ and choose
\begin{equation}
\delta \ge \left\lceil \log_\rho\frac{1}{\varepsilon_q}\right\rceil,\qquad
w \ge \left\lceil \log_2\frac{1}{\varepsilon_q}\right\rceil,\qquad
p \ge w.
\label{eq:delta_w_choices}
\end{equation}
If Assumptions~\ref{ass:eq_align}--\ref{ass:no_wrap} hold, then the worst-case \textbf{total} per-cycle bit-complexity satisfies
\begin{equation}
\begin{aligned}
C_{\mathrm{tot}}^{\mathrm{HE}}
=& O\!\big((n+2m)L^3 + mn\,(b_K+1)\,L^2\big),
\\
C_{\mathrm{tot}}^{\mathrm{QE}}
=& O\!\big((mn+n+m)p^3\big),
\label{eq:total_bit_bounds}
\end{aligned}
\end{equation}
and the corresponding per-party bounds are given in Table~\ref{tab:bit_bounds_both}.
\end{theorem}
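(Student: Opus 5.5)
The proof is an operation count. For each protocol I would list the primitive operations executed per control cycle at the sensor, the cloud and the actuator, attach a worst-case bit cost to each primitive, and add the results. Together with the accuracy choices \eqref{eq:delta_w_choices}, Assumption~\ref{ass:eq_align} makes the two schemes meet the same quantization error $\varepsilon_q$. For Paillier this follows from \eqref{eq:q_err_darup} with $\rho^{-\delta}\le\varepsilon_q$; for QE-MPC it follows from \eqref{eq:qec_moments} with $2^{-w}\le\varepsilon_q$. Assumption~\ref{ass:no_wrap} ensures that the ciphertext computation decodes to the integer affine map, so no extra correction terms appear. These per-party counts would fill Table~\ref{tab:bit_bounds_both}, and summing its rows gives \eqref{eq:total_bit_bounds}.

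\emph{Paillier side.} The sensor encrypts the $n$ components of $x(k)$, at cost $nL^3$ under the bit-cost model. I will take the encrypted offset $\tilde b^{(\sigma)}$ to be produced online as well, either at the sensor or by re-randomization, which adds $m$ encryptions and $mL^3$. The cloud evaluates \eqref{eq:he_affine_eval}. This takes $mn$ scalar multiplications $\odot$, each $O(b_KL^2)$, and $mn$ ciphertext additions $\oplus$, each $O(L^2)$ (the $m(n-1)$ additions inside the products plus the $m$ offset additions). Together that is $O(mn(b_K+1)L^2)$. The actuator performs $m$ decryptions, costing $mL^3$. Adding the three parties gives $(n+2m)L^3+mn(b_K+1)L^2$.

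\emph{QE-MPC side.} Every quantity is held in $p$-bit arithmetic, and $p\ge w$ by Assumption~\ref{ass:eq_align}. The elementary functions $\exp$, $\ln$ and real powers $a^{c}=\exp(c\ln a)$ can each be evaluated to $p$ bits with $O(p)$ multiplications, for example by series or Newton iteration. With schoolbook multiplication at $O(p^2)$, each elementary function then costs $O(p^3)$. Forming $\beta_i(k)$ from the key bits in \eqref{eq:beta_unified}, and computing $1/\beta_i$, costs at most $O(p^2)$ per coefficient and is absorbed. The sensor computes $n+m$ exponentials for \eqref{eq:enc_x}--\eqref{eq:enc_b}. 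The cloud computes the $mn$ powers in \eqref{eq:fCon_element}. The actuator computes $mn+m$ logarithms in \eqref{eq:dec_aggregate} and for the offset, plus $O(mn)$ multiplications and additions at $O(p^2)$ each. Summing gives $O((mn+n+m)p^3)$; the constant factor $2$ on the $mn$ term disappears inside $O(\cdot)$.

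The main obstacle is justifying the $O(p^3)$ cost per elementary function. This needs an argument that $O(p)$ iterations of a convergent scheme reach relative precision $2^{-p}$ over the operand range. That range is bounded because $\mathcal X$ is compact and $|\beta_i|\ge 1$, so $|[x]_i/\beta_i|$ is bounded. I would first try argument reduction to a fixed interval, followed by a truncated Taylor series with $O(p)$ terms, and cite \cite{menezes2018handbook} for the multiplication cost. A second point to check is that the rounding errors from $p$-bit arithmetic stay below $\varepsilon_q$ after the $n$-term sum in decryption. This requires $p\ge w$ plus an $O(\log n)$ guard-bit term, which does not change the asymptotic bound.
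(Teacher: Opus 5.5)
Your proposal is correct and is essentially the paper's proof. It uses the same per-party primitive counts as Table~\ref{tab:primitive_counts}: $n+m$ encryptions, $mn$ each of $\odot$ and $\oplus$, and $m$ decryptions for Paillier; $n+m$ exponentials, $mn$ powers, $mn+m$ logarithms and $O(mn)$ multiply-adds for QE-MPC. These are weighted by the stated per-primitive bit costs and summed to give Table~\ref{tab:bit_bounds_both} and \eqref{eq:total_bit_bounds}.

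The paper simply postulates the $O(p^3)$ cost of each $\exp$, $\ln$ or power evaluation as part of its $p$-bit model. Your argument-reduction/series justification and your rounding discussion are therefore extras. If you keep the rounding remark, note that the guard bits must absorb more than an $O(\log n)$ term: decryption amplifies errors by $|\beta_i|\le 2^{w_b-1}$ and by $|[K^{(\sigma)}]_{j,i}|$. This does not affect the bound, since it is stated in terms of $p$.
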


\begin{table}[htbp]
\centering
\caption{Worst-case per-cycle bit-complexity upper bounds per party}
\label{tab:bit_bounds_both}
\begin{tabular}{lcc}
\hline
Party & Paillier-EMPC & QE-MPC (Protocol 1) \\
\hline
Sensor 
& $O((n+m)L^3)$
& $O((n+m)p^3)$
\\
Controller 
& $O(mn(b_K+1)L^2)$
& $O(mn(p^3+p^2))$
\\
Actuator 
& $O(mL^3)$
& $O((mn+m)p^3 + mn\,p^2)$
\\
\hline
Total
& \makecell{$O((n+2m)L^3$ \\ $+\, mn(b_K+1)L^2)$}
& $O((mn+n+m)p^3)$
\\
\hline
\end{tabular}
\end{table}

\begin{table}[htbp]
\centering
\caption{Per-step primitive counts for encrypted explicit MPC evaluation
$u(k)=K^{(\sigma)}x(k)+b^{(\sigma)}$}
\label{tab:primitive_counts}
\begin{tabular}{l|c|c}
\hline
 & \textbf{Paillier-EMPC} & \textbf{QE-MPC (Protocol 1)} \\
\hline
Sensor (S)
&
\begin{tabular}[c]{@{}c@{}}
$\Enc_{\mathrm{HE}}$: $n+m$
\end{tabular}
&
\begin{tabular}[c]{@{}c@{}}
$f_{\mathrm{Enc}}$: $n+m$
\end{tabular}
\\
\hline
Controller (C)
&
\begin{tabular}[c]{@{}c@{}}
$\odot$: $mn$\\
$\oplus$: $mn$
\end{tabular}
&
\begin{tabular}[c]{@{}c@{}}
$f_{\rm Con}$: $mn$
\end{tabular}
\\
\hline
Actuator (A)
&
\begin{tabular}[c]{@{}c@{}}
$\Dec_{\mathrm{HE}}$: $m$
\end{tabular}
&
\begin{tabular}[c]{@{}c@{}}
$f_{\mathrm{Dec}}$: $mn+m$\\
$Sum$: $mn$
\end{tabular}
\\
\hline
\end{tabular}
\end{table}

\begin{proof}
From the equal-accuracy alignment \eqref{eq:eq_align}, it suffices to choose $\delta$ such that
$\rho^{-\delta}\le \varepsilon_q$.
Equivalently, $\delta \ge \log_\rho(1/\varepsilon_q)$, and since $\delta\in\mathbb{N}$ we may take
\[
\delta(\varepsilon_q)=\left\lceil \log_\rho\frac{1}{\varepsilon_q}\right\rceil.
\]
For QE-MPC, the quantizer satisfies $\mathbb{E}[e^2]\le 2^{-2w}$ from \eqref{eq:qec_moments}.
Hence the RMS error is bounded by $\sqrt{\mathbb{E}[e^2]}\le 2^{-w}$, and requiring $2^{-w}\le \varepsilon_q$
is equivalent to $w \ge \log_2(1/\varepsilon_q)$.
With $w\in\mathbb{N}$, a sufficient choice is
\[
w(\varepsilon_q)=\left\lceil \log_2\frac{1}{\varepsilon_q}\right\rceil,
\qquad
p \ge w(\varepsilon_q).
\]
\textbf{Paillier MPC:}
Under the bit-cost model above, $\oplus$, $\odot$, and $\Enc_{\mathrm{HE}}/\Dec_{\mathrm{HE}}$ cost $O(L^2)$, $O(b_KL^2)$, and $O(L^3)$, respectively.
Combining these costs with the primitive counts in Table~\ref{tab:primitive_counts} yields the Paillier column of Table~\ref{tab:bit_bounds_both}.\\
\textbf{QE-MPC:}
Table~\ref{tab:primitive_counts} gives $(n+m)$ calls to $f_{\mathrm{Enc}}$, $mn$ calls to $f_{\mathrm{Con}}$, $(mn+m)$ calls to $f_{\mathrm{Dec}}$, and $O(mn)$ accumulate operations per cycle.
Under the $p$-bit schoolbook model, each $\exp/\log/\mathrm{power}$ evaluation costs $O(p^3)$ and each multiply-add costs $O(p^2)$, yielding the QE-MPC column of Table~\ref{tab:bit_bounds_both}.
\end{proof}

For standard security levels, Paillier requires a large modulus bit length $L$ (e.g., $L=1024/2048$ \cite{alexandru2020cloud}), whereas the quantum-encrypted realization operates at a much smaller finite precision $p$ \cite{ren2025quantum}. Hence, in typical regimes where $p\ll L$, QE-MPC avoids the dominant $O(L^3)$ big-integer costs, requires smaller payload, and achieves lower online complexity. Here, payload denotes the number of transmitted bits per control cycle.

\section{Numerical Example}
\subsection{Simulation Setup}
We consider the battery-current loop of a hybrid battery--ultracapacitor power source, a standard benchmark in power-electronic energy management \cite{6514560}. Following the battery--converter model in \cite{6514560}, we use a disturbance-augmented discrete-time representation obtained by ZOH discretization with $T_s^{\mathrm{batt}}=0.05~\mathrm{s}$, and treat the constant current reference as a parameter, yielding $\theta(k)\in\mathbb{R}^6$. The output is the battery current, the input is the converter modulation index, and the battery state of charge is enforced through bounds on $v_{Cb}$.

The explicit MPC uses horizon $N_{\mathrm{batt}}=5$ and stage cost
\begin{equation}
J=\sum_{k=0}^{N_{\mathrm{batt}}-1}\!\left(
w_{i\mathrm{batt}}\big(i_{\mathrm{batt}}(k)-i_{\mathrm{batt}}^{\mathrm{req}}\big)^2
+r_{m\mathrm{batt}}\,m_{\mathrm{batt}}(k)^2
\right),
\label{eq:cost_batt}
\end{equation}
with $w_{i\mathrm{batt}}=r_{m\mathrm{batt}}=1$. The constraints are $m_{\mathrm{batt}}\in[0.1,0.9]$, $v_{Cb}\in[5.85,6.45]~\mathrm{V}$, and $i_{\mathrm{batt}}(k)+d_{\mathrm{batt}}(k)\in[i_{\mathrm{batt}}^{\mathrm{req}}-1,\ i_{\mathrm{batt}}^{\mathrm{req}}+1]$, with $i_{\mathrm{batt}}^{\mathrm{req}}\in[-2,2]~\mathrm{A}$. The plaintext controller is synthesized offline as a 6D explicit MPC law with 45 regions and used as the common benchmark for Paillier, RSA, AES, and QE-MPC, all with the same PWA partition and gains.

\subsection{Effectiveness}
We validate QE-MPC by implementing Protocol~\ref{prot:qe_empc_protocol} and comparing it with plaintext explicit MPC under the same region selection $\sigma=\pi(x(k))$.
Fig.~\ref{fig:tracking} shows indistinguishable closed-loop responses, and the input mismatch $|u_{\mathrm{QEMPC}}-u_{\mathrm{plain}}|$ remains at numerical roundoff level throughout the simulation.
The average mismatch is $1.70\times10^{-12}\,\mathrm{A}$, confirming exact recovery of the explicit PWA law~\eqref{eq:correctness} up to machine precision, consistent with Theorem~\ref{thm:effectiveness}.

\begin{figure}[htbp]
    \centering
    \includegraphics[width=0.9\linewidth]{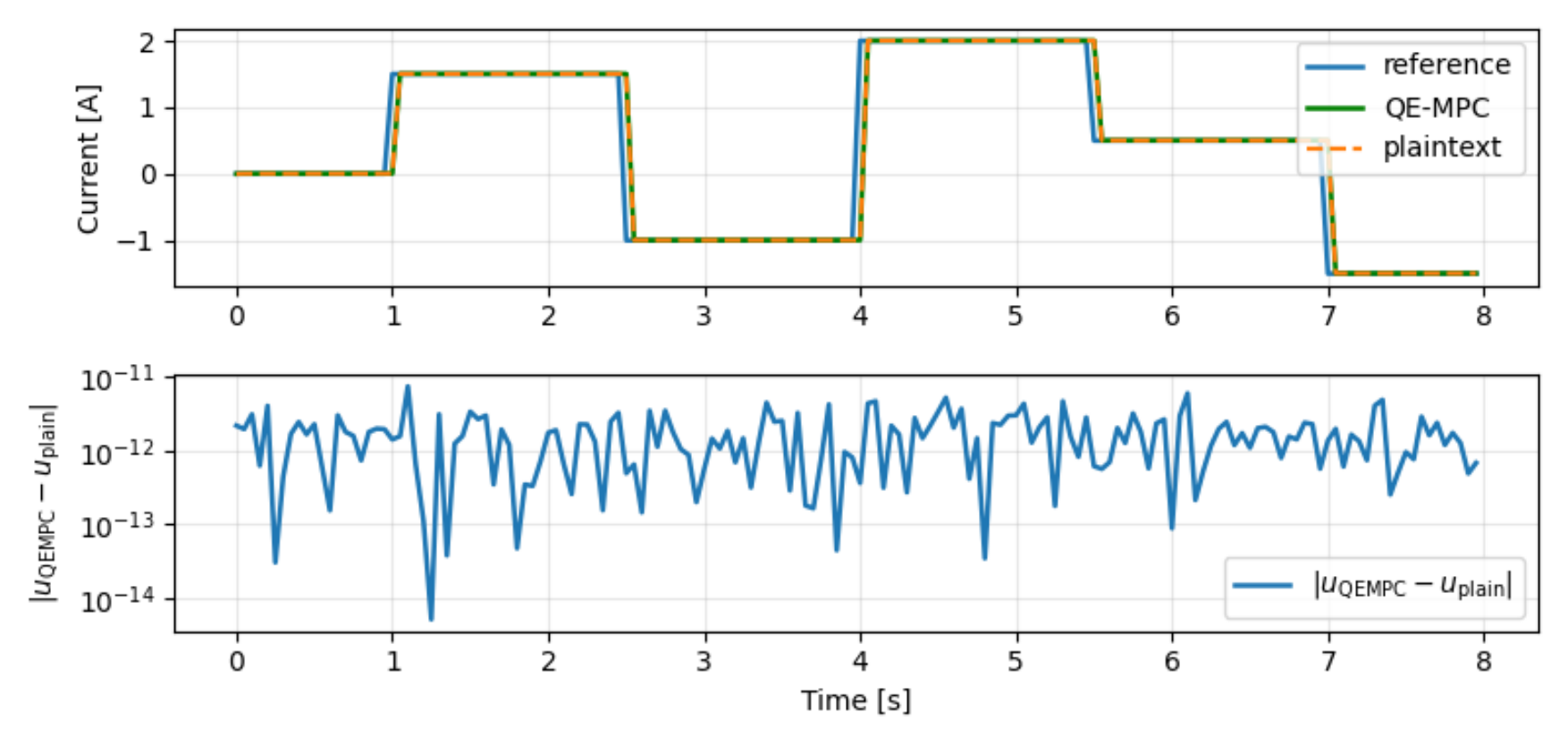}
    \caption{Effectiveness of QE-MPC. Top: closed-loop tracking responses of plaintext explicit MPC and QE-MPC under the same reference signal. Bottom: per-step input mismatch $|u_{\mathrm{QE-MPC}}-u_{\mathrm{plain}}|$, showing machine-precision agreement between encrypted and plaintext evaluations.}
    \label{fig:tracking}
\end{figure}

\begin{figure}[htbp]
    \centering
    \includegraphics[width=0.9\linewidth]{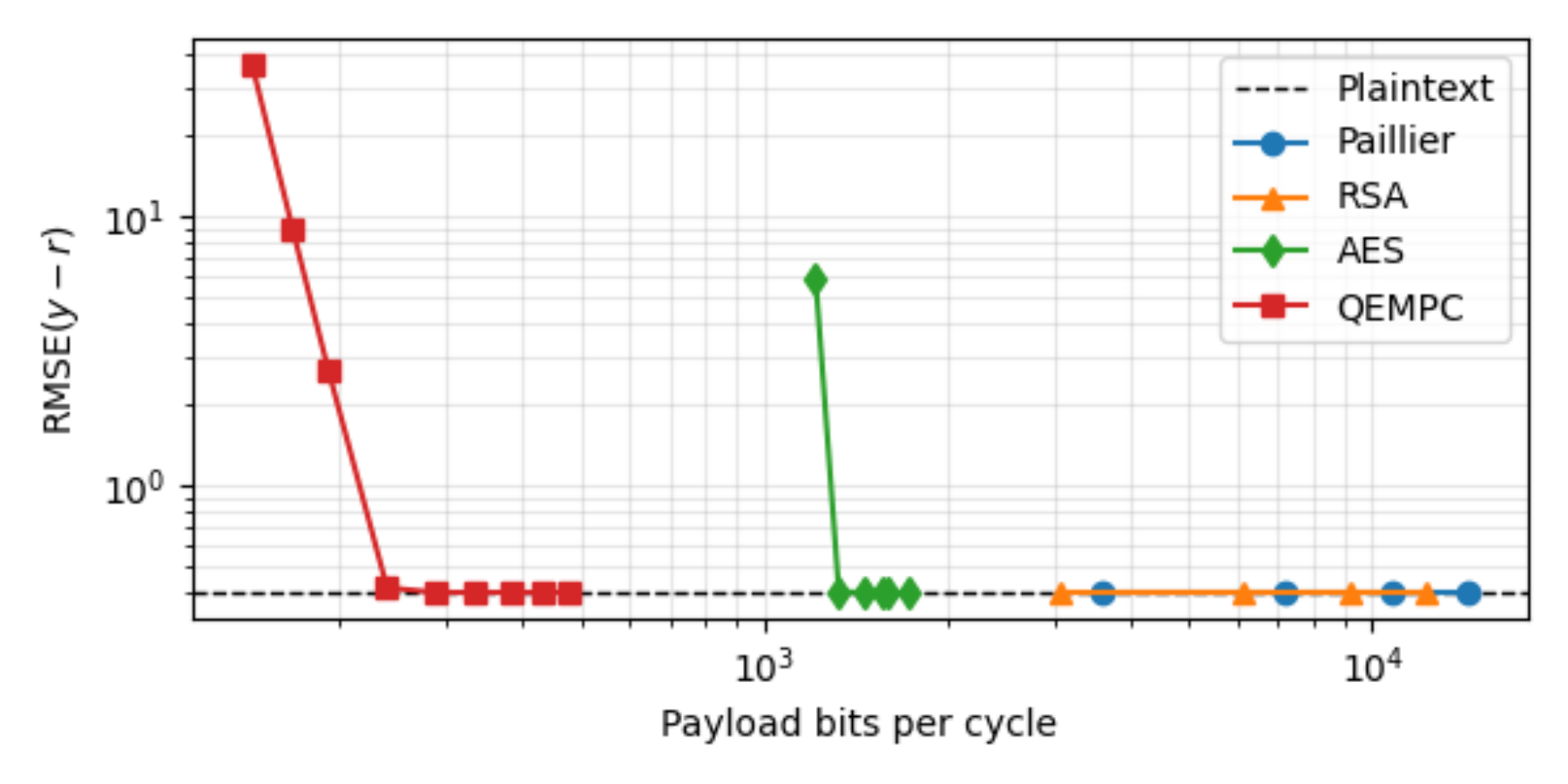}
    \caption{Tracking error $\mathrm{RMSE}(y-r)$ versus payload bits per cycle for plaintext, Paillier, RSA, AES, and QE-MPC. QE-MPC attains the same low-error regime with a substantially smaller payload than the classical encrypted baselines.}
    \label{fig:bits_vs_rmse}
\end{figure}

\subsection{Computational Efficiency}
We compare QE-MPC with implementations based on Paillier, RSA, and AES under matched numerical accuracy $\varepsilon_q$.
Fig.~\ref{fig:bits_vs_rmse} shows that QE-MPC attains the same error regime with a substantially smaller payload than the classical baselines.
Table~\ref{tab:timing_compare} reports the average per-cycle execution time measured with Python on an 3.8\,GHz AMD Ryzen~7~9700X.
QE-MPC is substantially faster than Paillier and RSA because it avoids costly large-integer arithmetic.
Although AES is fastest overall, it requires a much larger payload than QE-MPC to reach the same tracking accuracy.
The results are consistent with Theorem~\ref{thm:bit_ops_equal_accuracy}.

\begin{table}[htbp]
\centering
\caption{Per-cycle execution time comparison among Paillier, RSA, AES, and QE-MPC ($ms$)}
\label{tab:timing_compare}
\begin{tabular}{lcccc}
\hline
Module & Paillier & RSA & AES & QE-MPC \\
\hline
Sensor      & 1.325 & 0.153 & 0.135 & 0.175 \\
Controller  & 0.259 & 0.017 & 0.013 & 0.030 \\
Actuator    & 0.184 & 0.331 & 0.003 & 0.016 \\
\hline
Total cycle & 1.771 & 0.503 & 0.153 & 0.223 \\
\hline
\end{tabular}
\end{table}

\subsection{Confidentiality}
We evaluate confidentiality against a least-squares (LS) identification adversary following \cite{10106403}.
The adversary observes the transmitted proxy trajectory $\tilde x(k)$ and the known input $u(k)$, fits a linear one-step predictor, and rolls it out from the known initial state $x(0)$ to obtain $\hat x(k)$.
Confidentiality is measured by the average relative error $\frac{1}{T}\sum_{k=0}^{T-1}\|\hat x(k)-x(k)\|_2/\|x(k)\|_2$, averaged over 1000 trials under four perturbation settings.
Table~\ref{tab:ls_attack} shows negligible reconstruction error for plaintext, but errors of order $10^{-2}$ for Paillier, RSA, AES, and QE-MPC across all settings.
Hence, the encrypted proxies do not enable accurate recovery of the plaintext trajectory, and QE-MPC achieves confidentiality comparable to the classical encrypted baselines.

\begin{table}[htbp]
\centering
\caption{Average eavesdropping error across encryption schemes under different noises ($\times10^{-2}$)}
\label{tab:ls_attack}
\begin{tabular}{l|ccccc}
\hline
Noise & Plaintext & Paillier & RSA & AES & QE-MPC \\
\hline
None     & 0.00 & 1.34 & 1.17 & 1.05 & 1.38 \\
Gaussian & 0.03 & 1.34 & 1.17 & 1.05 & 1.36 \\
Uniform  & 0.21 & 1.33 & 1.18 & 1.06 & 1.24 \\
Impulse  & 0.59 & 1.30 & 1.18 & 1.09 & 1.20 \\
\hline
\end{tabular}
\end{table}

\section{Conclusions}
We proposed a quantum enhanced architecture for encrypted explicit MPC evaluation that enables lightweight encryption, computation, and decryption of the piecewise affine control law without relying on costly public key homomorphic operations.
We established exact recovery of the plaintext explicit MPC law and showed that, under matched control accuracy, QE-MPC achieves lower online computational complexity than classical encrypted MPC baselines.
The method was validated on the battery current loop of a hybrid battery--ultracapacitor power source, where the encrypted implementation preserved closed loop tracking with negligible numerical mismatch.
Numerical results showed that QE-MPC reduces payload and runtime relative to public key baselines while providing confidentiality against the considered LS attack comparable to other encrypted schemes.
These results support QE-MPC as an efficient approach to secure explicit MPC for networked systems.

\bibliographystyle{IEEEtran}
\bibliography{reference}

\end{document}